\theoremstyle{plain}
\newtheorem{theorem}{Theorem}[section]
\newtheorem{lemma}[theorem]{Lemma}
\newtheorem{corollary}[theorem]{Corollary}
\newtheorem{proposition}[theorem]{Proposition}
\newtheorem{definition}[theorem]{Definition}
\theoremstyle{definition}
\newtheorem{remark}[theorem]{Remark}
\newtheorem{example}[theorem]{Example}
\newcommand{\infinito}[1]{{}^\infty #1}
\newcommand{\ZZ}{\mathbb{Z}}			
\newcommand{\NN}{\mathbb{N}}			
\newcommand{\e}{\varepsilon}
\renewcommand{\d}{\delta}
\newcommand{\pacman}[1]{\tikz[baseline=.1em,scale=.4]{
\draw (-.1,0) -- (-.1,.85) -- (.8,.85) -- (.8,0) -- cycle; 
  \draw [fill=#1] (.45,.425) -- (.7,.575) arc (+25:+335:.375) -- cycle;
  \fill (0.55,0.6) circle (.35mm)
}}
\newcommand{\cinco}[1]{\tikz[baseline=.0001em,scale=.28]{
  \draw (-.35,-.2) -- (-.35,1.05) -- (1.05,1.05) -- (1.05,-0.2) -- cycle;
  \draw (-.1,-.2) -- (-.1,1.05);
  \draw (.85,-.2) -- (.85,1.05); 
  \draw [fill=#1] (0.05,0.05) -- (.05,.5) arc (+180:0:.3) -- (.65,0.05) --
  (.55,.2) -- (.45,0.05) -- (.35,.2) -- (.25,0.05) -- (.15,.2) -- cycle;
    \coordinate (eye) at (360*rand:.03);
    \foreach \x in {.17,.43}{
      \fill[white] (\x,.5) circle[radius=.1];
      \fill[black] (\x,.5) ++(eye) circle[radius=.05];
    }
}}
\newcommand{\vacio}{\tikz[baseline=.1em,scale=.4]{ 
\draw (-.1,0) -- (-.1,.85) -- (.8,.85) -- (.8,0) -- cycle; 
}}
\newcommand{\ghost}[1]{\tikz[baseline=.1em,scale=.4]{
  \draw (-.1,0) -- (-.1,.85) -- (.8,.85) -- (.8,0) -- cycle; 
  \draw [fill=#1] (0.05,0.05) -- (.05,.5) arc (+180:0:.3) -- (.65,0.05) --
  (.55,.15) -- (.45,0.05) -- (.35,.15) -- (.25,0.05) -- (.15,.15) -- cycle;
    \coordinate (eye) at (360*rand:.03);
    \foreach \x in {.17,.43}{
      \fill[white] (\x,.5) circle[radius=.1];
      \fill[black] (\x,.5) ++(eye) circle[radius=.05];
    }
   \draw [fill=white] (0.12,0.2) -- (0.12,0.3) -- (0.2,0.35) -- (0.28,0.3) -- (0.36,0.35) -- (0.44,0.3) -- (0.52,0.35) -- (0.6,0.3)-- (0.6,0.2) -- (0.52,0.25) -- (0.44,0.2) -- (0.36,0.25) -- (0.28,0.2) -- (0.2,0.25) -- cycle;
}}
\newcommand{\key}[1]{\tikz[baseline=.1em,scale=.4]{
  \draw (-.1,0) -- (-.1,.85) -- (.8,.85) -- (.8,0) -- cycle; 
  \draw [fill=#1] (0.05,0.05) -- (.05,.5) arc (+180:0:.3) -- (.65,0.05) --
  (.55,.15) -- (.45,0.05) -- (.35,.15) -- (.25,0.05) -- (.15,.15) -- cycle;
    \coordinate (eye) at (360*rand:.03);
    \foreach \x in {.17,.43}{
      \fill[white] (\x,.5) circle[radius=.1];
      \fill[black] (\x,.5) ++(eye) circle[radius=.05];
    }
   \draw [fill=white] (0.12,0.2) -- (0.12,0.3) -- (0.2,0.35) -- (0.28,0.3) -- (0.36,0.35) -- (0.44,0.3) -- (0.52,0.35) -- (0.6,0.3)-- (0.6,0.2) -- (0.52,0.25) -- (0.44,0.2) -- (0.36,0.25) -- (0.28,0.2) -- (0.2,0.25) -- cycle;
   \draw [fill=green] (0.1,0.55) arc (+270:0:0.15)  -- (0.45,0.7) -- (0.45,0.8) -- (0.55,0.8) -- (0.55,0.7) -- (0.65,0.7) -- (0.65,0.8) -- (0.75,0.8) -- (0.75,0.55) -- cycle;
}}
\newcommand{\pared}{\tikz[baseline=.1em,scale=.4]{ 
\draw (-.1,0) -- (-.1,.85) -- (.8,.85) -- (.8,0) -- cycle ; 
\draw (.1,0) -- (.1,.85);
\draw (.6,0) -- (.6,.85);
}}
\newcommand{\uno}{\tikz[baseline=.1em,scale=.4]{ 
		\draw (-.1,0) -- (-.1,.85) -- (.8,.85) -- (.8,0) -- cycle; 
		\draw (.1,0) -- (.1,.85);
		\draw (.6,0) -- (.6,.85);
}}
\newcommand{\cero}{\tikz[baseline=.1em,scale=.4]{ 
		\draw (-.1,0) -- (-.1,.85) -- (.8,.85) -- (.8,0) -- cycle; 
}}
\title{Diameter mean equicontinuity and cellular automata}
\author{Luguis de los Santos Ba\~nos and Felipe Garc\'{\i}a-Ramos}
\date{}	
\begin{document}
\maketitle
\begin{abstract}
 Mean and diam-mean equicontinuity are dynamical properties that have been of use in the study of non-periodic order. We show that the Pacman automaton is not almost diam-mean equicontinuous (it is already known that it is almost mean equicontinuous). 	
\end{abstract}
\section{Introduction}
In this paper we study cellular automata (CA) in the context of \emph{topological dynamical systems (TDS)}, i.e., pairs $(X,T)$, where $X$ is a compact metric space and $T:X\rightarrow X$ a continuous function \cite{kuurka2003topological,ceccherini2010cellular}. In particular, we are interested in understanding what types of non-periodic local order can CA exhibit. 


In the context of cellular automata (and symbolic dynamics in general), periodicity is linked to the concept of equicontinuity. A TDS is \emph{equicontinuous} if the family $\{T^n\}_{n\in\NN}$ is equicontinuous. A CA is equicontinuous if and only if it is eventually periodic \cite{kuurka2003topological}. 

One may also study this notion locally. A point $x\in X$ is an \emph{equicontinuity point} if the orbit of a small ball around $x$ will always stay small, that is if for every $\e>0$ there exists $\d >0$ such that 
$diam(T^{i}B_{\d}(x))<\e$ for every $i\in \NN$. We say a TDS is \emph{almost equicontinuous} if the equicontinuity points are dense. Given a CA, it is not difficult to check that a point is equicontinuous if and only if it is locally eventually periodic; that is, if every column is eventually periodic (Proposition \ref{prop:lep}). For CA, almost equicontinuity is more natural than equicontinuity since it can be used to classify CA using sensitivity to initial conditions \cite{kurka1997languages}.

A weaker notion of equicontinuity, \emph{diam-mean equicontinuity}, requires the diameter of small balls to stay small on average (see Definition \ref{def:diam-mean}). The notion of diam-mean equicontinuity has been used to characterize regularity properties of the maximal equicontinuous factor \cite{garciajagerye}, which are natural in the context of aperiodic order. A weaker property, mean equicontinuity, is connected with the concept of discrete spectrum \cite{weakforms,huang2018bounded,downarowiczglasner,fuhrmann2018structure} and almost periodic functions \cite{garcia2019mean} (for a survey on mean equicontinuity see \cite{lisurveymean}). Nonetheless, the view point of this paper is not quite the same as in the study of quasicrystals and aperiodic order as in \cite{baake2013aperiodic}. Because almost mean equicontinuous systems are only required to be ordered locally almost everywhere (not every point is mean equicontinuous), and they may exhibit chaotic properties like positive entropy. 

In \cite{delossantos}, the authors constructed a CA that is almost mean equicontinuous but not almost equicontinuous. In this paper we will show that this CA is not almost diam-mean equicontinuous. The question of whether there exists an almost diam-mean equicontinuous but not almost equicontinuous CA will be addressed in another paper \cite{delossantos2}. 

Sensitivity to initial conditions (or simply sensitivity) is a notion of chaos that can also be studied in mean and diam-mean forms. A cellular automaton that is neither almost mean equicontinuous nor mean sensitive can be constructed \cite{delossantos} (hence Kůrka's dichotomy is not satisfied on the mean forms).

In summary; almost equicontinuous CA exhibit eventually periodic behaviour. CA are either almost equicontinuous or sensitive. Among sensitive CA, there exists very chaotic CA, like the shift, which satisfy all the sensitivity type properties. Nonetheless, among the sensitive CA, there exist almost diam-mean equicontinuous CA and CA that are neither almost diam-mean equicontinuous nor diam-mean sensitive. Among diam-mean sensitive CA, there exists almost mean equicontinuous and CA that are neither mean sensitive nor almost mean equicontinuous. 

We do not know if there are natural hypothesis that imply that a CA must be either almost mean equicontinuous or mean sensitive. 
\section{Preliminaries}
We say $(X,T)$ is a \textbf{topological dynamical system (TDS)} if $X$ is a compact metric space (with metric $d$) and $T:X\rightarrow X$ is a continuous function. Given a metric space $X$, we set $B_{\d}(x)=\{y\in X:d(x,y)<\d\}$, and the diameter of a subset $A$ with $diam(A)$.
A subset of a topological space is \textbf{residual (or comeagre)} if it is the intersection of a countable number of dense open sets.
\begin{definition}

Let $(X,T)$ be a TDS and $x\in X$.  

\begin{enumerate}
\item The point $x$ is an \textbf{equicontinuity point} if for every $\e>0$ there exists $\d >0$ such that $$diam(T^{i}B_{\d}(x))<\e$$ for every $i\in \NN$.  The set of equicontinuity points of $(X,T)$ is denoted by $EQ$.

\item  $(X,T)$ is \textbf{equicontinuous} if 
$EQ=X$.

\item  $(X,T)$ is \textbf{almost equicontinuous} if $EQ$
is a residual set.
 
\end{enumerate}
\end{definition}

The concept of mean equicontinuity first appeared in the work of Fomin\cite{fomin} and Oxtoby\cite{oxtoby}. We say $x$ is a \textbf{mean equicontinuity point} if for every $\e>0$ there exists $\d >0$ such that for every $y \in B_{\d}(x)$ we have that 
$$\limsup_{n\to \infty}\frac{\sum_{i=1}^{n}d(T^{i}x,T^{i}y)}{n}<\e .$$ 

A related concept, diam-mean equicontinuity, was introduced in \cite{weakforms} and studied in \cite{garciajagerye}. For a survey on recent results see \cite{lisurveymean}.

\begin{definition}
	\label{def:diam-mean}
	Let $(X,T)$ be a topological dynamical system. 
	\begin{itemize}
		
		\item We say $x$ is a \textbf{diam-mean equicontinuity point} if for every $\e>0$ there exists $\d >0$ such that 
		$$\limsup_{n\to \infty}\frac{\sum_{i=1}^{n}diam(T^{i}B_{\d}(x))}{n}<\e .$$ 
		We denote the set of diam-mean equicontinuity points by $EQ^{DM}$.
		
		\item $(X,T)$ is \textbf{diam-mean equicontinuous} if $EQ^{DM}=X$.
		
		\item $(X,T)$ \textbf{almost diam-mean equicontinuous} is $EQ^{DM}$ is residual.
		
	\end{itemize}
\end{definition}

It is trivial to see that every equicontinuity point is a diam-mean equicontinuity point, and that every diam-mean equicontinuity point is a mean equicontinuity point.

Now we will give the basic set up of symbolic dynamics. 


 Given a finite set $A$ (called an alphabet), we define the \textbf{$A$-full shift} as $A^{\ZZ }$. If $X$ is the $A$-full shift for some finite $A$ we say that $X$ is a \textbf{full shift}. 
 Given $x\in A^{\ZZ}$, we represent the $i$-th coordinate of $x$ as $x_{i}$. Also, given $i,j\in \ZZ$ with $i<j$, we define the finite word $x_{[i,j]}=x_{i}\dots x_{j}$. Let $A^{+}$ be the set of all finite words.

We endow any full shift with the metric 
\begin{displaymath}
\begin{array}{rcl}
d(x,y) & = & \left\{ \begin{array}{ccl} 2^{-i} & \text{if} \ x\neq y & \text{where} \ i=\min \{ |j| : x_{j}\neq y_{j}  \} ; \\  
0 & \text{otherwise.} &
\end{array} \right.
\end{array}
\end{displaymath}

This metric generates the same topology as the product topology.
For any full shift $A^{\ZZ}$, we define the \textbf{shift map} $\sigma:A^{\ZZ}\rightarrow A^{\ZZ}$ by $\sigma (x)_{i}=x_{i+1}$. The shift map is continuous.


\begin{definition}\label{defCAShift}
We say that $(X,T)$ is a \textbf{cellular automaton (CA)} if $X$ is a full shift and $T:X\rightarrow X$ is continuous and commutes with $\sigma$, i.e., $\sigma \circ T=T\circ \sigma$. 
\end{definition}

\begin{remark}
	Note that $Tx_i$ represents the $i$th coordinate of the point $Tx$, and $Tx_{[0,n]}$ the word extracted from the $[0,n]$-coordinates of the point $Tx$. 
\end{remark}

The following fact can be extracted from the proof of \cite[Theorem 4]{kurka1997languages}.
\begin{proposition}
	\label{prop:lep}
	Let $(X,T)$ be a CA. If $x\in X$ an equicontinuity point then $x$ is locally eventually periodic, i.e., for every $i\in \ZZ$ we have that $T^nx_i$ is an eventually periodic sequence (of $n$). 
\end{proposition}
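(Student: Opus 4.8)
The plan is to exploit equicontinuity at $x$ not by tracking the orbit of $x$ itself, but by probing the equicontinuity condition with a single, carefully chosen \emph{spatially periodic} configuration, and then to use the elementary fact that a cellular automaton maps spatially periodic points to spatially periodic points of the same period.

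Fix $i_0\in\ZZ$; I must show that $n\mapsto T^nx_{i_0}$ is eventually periodic. First I would set $\e=2^{-|i_0|}$, so that $\diam(A)<\e$ forces every pair of points of $A$ to agree on the whole window $[-|i_0|,|i_0|]$, in particular at the coordinate $i_0$. Since $x\in EQ$, there is $\d=2^{-m}$ — which I may take with $m\ge|i_0|$, as shrinking $\d$ only strengthens the conclusion — such that $\diam(T^iB_{\d}(x))<\e$ for every $i\in\NN$. The central step is then to let $y$ be the spatially $(2m+1)$-periodic point agreeing with $x$ on $[-m,m]$: the translates of $[-m,m]$ by multiples of $2m+1$ tile $\ZZ$ exactly, so such a $y$ exists and satisfies $\sigma^{2m+1}y=y$. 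Since $y$ agrees with $x$ on $[-m,m]$ we have $y\in B_{\d}(x)$, whence the equicontinuity bound gives $T^iy_{i_0}=T^ix_{i_0}$ for every $i\in\NN$.

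To finish, I would observe that because $T$ commutes with $\sigma$, each iterate $T^iy$ is again $(2m+1)$-periodic, so the entire forward orbit $\{T^iy:i\in\NN\}$ lies inside the finite set of $(2m+1)$-periodic configurations (a set of cardinality $|A|^{2m+1}$). A deterministic orbit trapped in a finite set is eventually periodic, so $n\mapsto T^ny_{i_0}$ is eventually periodic; transferring this along the identity $T^iy_{i_0}=T^ix_{i_0}$ shows that $n\mapsto T^nx_{i_0}$ is eventually periodic.

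The step I expect to be the crux is the very choice to replace $x$ by the periodic point $y$. The naive route — finding two times $n_1<n_2$ at which the orbit of $x$ repeats on a large window and arguing that it then repeats forever — fails, because equicontinuity is anchored at $x$ and controls perturbations of $x$, not of $T^{n_1}x$, so the repetition does not propagate. Feeding equicontinuity a spatially periodic configuration converts the infinitary, point-based hypothesis into the finite combinatorial fact that the orbit lives in a finite phase space, which is precisely what produces temporal periodicity. The remaining points — the exact tiling that makes $y$ well defined, the preservation of spatial period under $T$, and the transfer of eventual periodicity from agreement on $[-|i_0|,|i_0|]$ — are routine.
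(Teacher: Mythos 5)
Your proof is correct and complete. Note, however, that the paper does not actually prove Proposition \ref{prop:lep}: it only remarks that the fact ``can be extracted from the proof of \cite[Theorem 4]{kurka1997languages}'', and K\r{u}rka's argument there runs through the machinery of $r$-blocking words (an equicontinuity point carries blocking words on both sides of the coordinate $i$, the strip between two blocking columns evolves as a finite-state system driven by the blocking columns, and one must separately argue that the blocking columns themselves are eventually periodic). Your route bypasses all of that: you feed the equicontinuity condition a single spatially $(2m+1)$-periodic probe $y$ agreeing with $x$ on $[-m,m]$, use $\sigma\circ T=T\circ\sigma$ to confine the $T$-orbit of $y$ to the finite set of $\sigma^{2m+1}$-fixed points, and transfer the resulting eventual periodicity back to the column of $x$ via the bound $d(T^nx,T^ny)<2^{-|i_0|}$. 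Each step checks out: $d(u,v)<2^{-|i_0|}$ does force agreement on $[-|i_0|,|i_0|]$ under the paper's metric, the translates of $[-m,m]$ by multiples of $2m+1$ do tile $\ZZ$ so $y$ is well defined, and a deterministic orbit in a finite set is eventually periodic. What your approach buys is a short, self-contained, purely pigeonhole argument requiring nothing about blocking words or sensitivity; what it gives up is the extra structural information K\r{u}rka's proof yields (the characterization of almost equicontinuity via existence of blocking words), which is not needed for this proposition.
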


\section{Pacman CA}


In \cite{delossantos}, the Pacman CA was defined using radius 1 rules.
Let $A=\{ \cero,\uno,\ghost{blue} ,\key{blue},\pacman{yellow},\cinco{blue} \}$. We will define the CA formally and then use a more heuristic explanation of the CA. We define the function $T:A^{\ZZ}\rightarrow A^{\ZZ}$ locally as follows 
{\small
	\begin{displaymath}
		\begin{array}{ccc}
			Tx_{i}& = & \displaystyle\left\{ \begin{array}{ccl}
				\cero & if & (x_{i-1}\in \{ \cero,\ghost{blue},\key{blue} \} \wedge [(x_{i}\in \{ \cero , \ghost{blue}, \key{blue} \} \wedge x_{i+1}\in \{ \cero,\uno,\pacman{yellow} \} )\\ 
				& & \vee ( x_{i}=\pacman{yellow} \wedge x_{i+1}\notin \{ \uno,\cinco{blue} \} )]) \\
				& & \vee (x_{i-1}\in \{ \uno,\cinco{blue} \} \wedge [(x_{i}\in \{ \cero,\key{blue} \} \wedge x_{i+1}\in \{ \cero,\uno,\pacman{yellow} \} ) \\ 
				& & \vee (x_{i}=\pacman{yellow} \wedge x_{i+1}\notin \{ \uno,\cinco{blue} \} )]), \\
				\uno & if & x_{i}\in \{ \uno, \cinco{blue} \} \wedge x_{i+1}\notin \{ \key{blue},\cinco{blue}\} , \\
				
				\ghost{blue} & if & (x_{i-1}\in \{ \cero,\ghost{blue},\key{blue}  \} \wedge x_{i}\in \{ \cero,\ghost{blue},\key{blue}\} \wedge x_{i+1}\in \{ \ghost{blue},\cinco{blue} \} ) \\ 
				&  & \vee (x_{i-1}\in \{ \uno,\cinco{blue} \}\wedge  x_{i}\in \{ \cero,\key{blue} \} \wedge x_{i+1}=\cinco{blue} ), \\
				
				\key{blue} & if & (x_{i+1}=\key{blue}\wedge [(x_{i-1}\in \{ \cero,\ghost{blue},\key{blue}\} \wedge x_{i}\in \{ \cero,\ghost{blue},\key{blue} \} )\\ 
				& & \vee (x_{i-1}\in \{ \uno,\cinco{blue}\} \wedge x_{i}\in \{ \cero,\key{blue} \} )]) \\
				& & \vee (x_{i-1}=\pacman{yellow} \wedge x_{i}\notin \{ \uno,\cinco{blue} \} \wedge x_{i+1}\in \{ \uno,\cinco{blue} \} ) \\
				& & \vee (x_i=\pacman{yellow} \wedge x_{i+1}\in \{ \uno,\cinco{blue} \}),\\
				
				\pacman{yellow} & if &  (x_{i-1}=\uno \wedge [(x_{i}\in \{ \cero,\key{blue} \} \wedge x_{i+1}=\ghost{blue}) \vee x_{i}=\ghost{blue}]) \\
				& & \vee (x_{i-1}= \pacman{yellow} \wedge x_{i},x_{i+1}\notin \{ \uno,\cinco{blue} \}) \text{, and} \\ 
				
				\cinco{blue} & if & x_{i}\in \{ \uno , \cinco{blue} \} \wedge x_{i+1}\in \{ \key{blue},\cinco{blue} \} .\\
				
			\end{array}\right.
			
		\end{array}
	\end{displaymath}
}
This CA has memory and anticipation 1.
We will call the members of the alphabet as follows: 
\begin{itemize}
	\item \cero \ empty space,
	\item \uno \ empty door,
	\item \pacman{yellow} \ pacman,
	\item \ghost{blue} \ ghost.
	\item \key{blue} \ keymaster ghost, and
	\item \cinco{blue} \ door with ghost.
	
\end{itemize}
We will now explain the main properties of this map so the reader gets intuition on the dynamics. The reader does not need to know the rules of the game \emph{Pacman}. One only needs to understand that pacmans eat blue ghosts. 
\begin{itemize}
	\item A door always stays fixed in the same place (a ghost might cross it); that is,  $x_i\in \{\uno, \cinco{blue}\}$ if and only if $Tx_i\in \{\uno, \cinco{blue}\}$.
	
	\item Pacmans \pacman{yellow} move to right (one position per unit of time) if there is no door; that is, if $x_i=\pacman{yellow}$ and  $x_{i+1},x_{i+2}\notin \{\uno, \cinco{blue}\}$ then $Tx_{i+1}=\pacman{yellow}$.
	\item If a pacman encounters a door (on the right) it is transformed into keymaster ghost \key{blue}; that is, if $x_i=\pacman{yellow}$ and $x_{j}\in \{\uno, \cinco{blue}\}$ with $j\in \{i+1,i+2\}$ then $Tx_{j-1}=\key{blue}$.
	\item Ghosts (\ghost{blue},\key{blue}) always move to the left (one position per unit of time) if there is no pacman or a door on the left; that is, if $x_{i}=  \ghost{blue} ( x_{i}=\key{blue})$, $x_{i-1}\in \{ \cero , \ghost{blue}, \key{blue} \}$ and $x_{i-2}\in \{ \cero,  \ghost{blue}, \key{blue}\} (x_{i-2}\neq  \pacman{yellow} )  $, then $Tx_{i-1}=\ghost{blue}(Tx_{i-1}=\key{blue})$. 
	\item If a ghost or keymaster ghost encounters a pacman (on the left) it will disappear (get eaten); that is, 
	\begin{itemize}

		\item if $x_{i}\in \{\ghost{blue}, \key{blue}\}$ and $x_{i-1}=\pacman{yellow}$, then $Tx_{i-1}\notin \{\ghost{blue}, \key{blue}\}$; and 
		
		\item if $x_{i}\in \{\ghost{blue}, \key{blue}\}$, $x_{i-2}=\pacman{yellow}$ and $x_{i-1}\notin \{ \uno ,\cinco{blue} \}$, then $Tx_{i-1}=\pacman{yellow}$. 
		
	\end{itemize}
	
	\item If a ghost \ghost{blue} encounters a door it transforms into a pacman; that is, 
	\begin{itemize}
		\item if $x_{i}=\ghost{blue}$ and $x_{i-1}\in \{ \uno , \cinco{blue}\}$, then $Tx_{i}=\pacman{yellow}$; and
		\item if $x_{i}=\ghost{blue}$, $x_{i-2}=\{ \uno , \cinco{blue}\}$ and $x_{i-1}\notin \{ \uno, \cinco{blue},\pacman{yellow}\}$, then $Tx_{i-1}=\pacman{yellow}$.
	\end{itemize}
	
	\item If a keymaster ghost encounters a door he will enter the door, lose its key, and (in the following step) proceed to the left; that is, if $x_{i}=\key{blue}$ and $x_{i-1}\in \{ \uno , \cinco{blue}\}$, then $Tx_{i-1}=\cinco{blue}$ and
	\begin{itemize}
		\item if $x_{i-3}=\cero$, then $T^{2}x_{i-2}=\ghost{blue}$ and
		\item if $x_{i-3}=\pacman{yellow}$, then $T^{2}x_{i-2}=\key{blue}$.
	\end{itemize} 
	
\end{itemize}
When describing a point in $A^\ZZ$ we will use a point (.) to indicate the \emph{zero}th coordinate, for example if $x= \ ^{\infty}\cero.\pacman{yellow} \ \cero^{\infty}$, it means that $x_0=\pacman{yellow}$ and $x_i=\cero$ for every $i\neq 0$.
We will now provide some examples on how the Pacman CA works. Notice that time flows downward on the diagrams. 

\begin{example}\label{example 2}
	Let $m\geq 2$, and $w=\uno \ \cero^{m} \ \uno$. We will show a section of the orbit of $x:=  ^{\infty }\cero .w\key{blue}^{\infty }$. In this example we can observe that the space between two doors is acting like a some sort of ``filter", because many ghosts disappear. 
	
	\begin{displaymath}
		\begin{array}{cccccccccccc}
			\cero &	\uno        & \cero          &\cero          &\cero          &\cero          &\cero        &\uno        &\key{blue}&\key{blue} \\
			\cero &	\uno        & \cero          &\cero          & \cero         &\cero          &\cero        &\cinco{blue}&\key{blue}&\key{blue} \\
			\cero &	\uno        & \cero          &\cero          &\cero          &\cero          &\ghost{blue}&\cinco{blue}&\key{blue}&\key{blue} \\
			\cero &	\uno        & \cero          &\cero          &\cero          & \ghost{blue} &\ghost{blue}&\cinco{blue}&\key{blue}&\key{blue} \\
			\cero &	\uno        & \cero          &\cero          &\ghost{blue}  &\ghost{blue}  &\ghost{blue}&\cinco{blue}&\key{blue}&\key{blue} \\
			\cero &	\uno        & \cero          &\ghost{blue}  &\ghost{blue}  &\ghost{blue}  &\ghost{blue}&\cinco{blue}&\key{blue}&\key{blue} \\
			\cero &	\uno        & \pacman{yellow}&\ghost{blue}  &\ghost{blue}  &\ghost{blue}  &\ghost{blue}&\cinco{blue}&\key{blue}&\key{blue} \\
			\cero &	\uno        & \cero          &\pacman{yellow}&\ghost{blue}  &\ghost{blue}  &\ghost{blue}&\cinco{blue}&\key{blue}&\key{blue} \\
			\cero &	\uno        & \cero          &\cero          &\pacman{yellow}&\ghost{blue}  &\ghost{blue}&\cinco{blue}&\key{blue}&\key{blue} \\
			\cero &	\uno        & \cero          &\cero          &\cero          &\pacman{yellow}&\ghost{blue}&\cinco{blue}&\key{blue}&\key{blue} \\
			\cero &	\uno        & \cero          &\cero          &\cero          &\cero          &\key{blue} &\cinco{blue}&\key{blue}&\key{blue} \\
			\cero &	\uno        & \cero          &\cero          &\cero          &\key{blue}   &\ghost{blue}&\cinco{blue}&\key{blue}&\key{blue}\\
			\cero &	\uno        &\cero          &\cero          &\key{blue}   &\ghost{blue}  &\ghost{blue}&\cinco{blue}&\key{blue}&\key{blue}\\
			\cero &	\uno        &\cero          &\key{blue}   &\ghost{blue}  &\ghost{blue}  &\ghost{blue}&\cinco{blue}&\key{blue} &\key{blue}\\
			\cero &	\uno        &\key{blue}   &\ghost{blue}  &\ghost{blue}  &\ghost{blue}  &\ghost{blue}&\cinco{blue}&\key{blue}&\key{blue}\\
			\cero &	\cinco{blue}&\pacman{yellow}&\ghost{blue}  &\ghost{blue}  &\ghost{blue}  &\ghost{blue}&\cinco{blue}&\key{blue}&\key{blue}  \\
			
		\end{array}
	\end{displaymath}
	
\end{example}

\begin{example}\label{example 1}
	Let $w=\uno \ \cero \ \cero \ \ghost{blue} \ \pacman{yellow} \ \cero \ \uno \ \key{blue}$. We show a section of the orbit of $x= \ ^{\infty}\cero.w\cero^{\infty}$.
	\begin{displaymath}
		\begin{array}{cccccccccccc}
			\cero & \uno        & \cero          &\cero          &\ghost{blue}  &\pacman{yellow}&\cero        &\uno        &\key{blue}&\cero \\
			\cero & \uno        & \cero          &\ghost{blue}  & \cero         &\cero          &\key{blue} &\cinco{blue}&\cero &\cero \\ 
			\cero & \uno        & \pacman{yellow}&\cero          &\cero          &\key{blue}   &\ghost{blue}&\uno      &\cero & \cero\\  
			\cero & \uno        & \cero          &\pacman{yellow}&\key{blue}   & \ghost{blue} &\cero        &\uno        &\cero &\cero\\
			\cero & \uno        & \cero          &\cero          &\pacman{yellow}&\cero          &  \cero      &\uno        &\cero &\cero\\
			\cero & \uno        & \cero          &\cero          &\cero          &\pacman{yellow}&\cero        &\uno        &\cero &\cero \\
			\cero & \uno        & \cero          &\cero          &\cero          &\cero          &\key{blue} &\uno &\cero &\cero\\
			\cero & \uno        & \cero          &\cero          &\cero          &\key{blue}   &\cero        &\uno &\cero &\cero\\
			\cero & \uno        & \cero          &\cero          &\key{blue}   &\cero          &\cero        &\uno &\cero &\cero\\
			\cero & \uno        & \cero          &\key{blue}   &\cero          &\cero          &\cero        &\uno &\cero &\cero\\
			\cero & \uno        & \key{blue}   &\cero          &\cero          &\cero          &\cero        &\uno &\cero &\cero\\
			\cero & \cinco{blue}& \cero          &\cero          &\cero          &\cero          &\cero        &\uno &\cero &\cero
		\end{array}
	\end{displaymath}
\end{example}

\begin{theorem}
	\cite{delossantos}
	The Pacman CA is almost mean equicontinuous (the set of mean equicontinuity points is dense). 
\end{theorem}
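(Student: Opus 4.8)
The plan is to exhibit a dense family of configurations, each of which I will show is a mean equicontinuity point. Fix such a candidate $x$ and a target $\e>0$. Since $d(T^ix,T^iy)\le 2^{-r}$ whenever $T^ix$ and $T^iy$ agree on the central window $[-r,r]$, while $d\le 1$ always, the average $\frac1n\sum_{i=1}^n d(T^ix,T^iy)$ is controlled once I bound, uniformly over $y\in B_\d(x)$, the upper density of the set of times $i$ at which $T^ix$ and $T^iy$ \emph{disagree} somewhere on $[-r,r]$ (choosing $r$ so that $2^{-r}<\e/2$). Thus the whole problem reduces to showing that, by taking $\d$ small (equivalently, forcing $x$ and $y$ to agree on a large block $[-R,R]$), this density of central disagreements can be made as small as we like.

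The dense family I would use consists of configurations having one door (a symbol in $\{\uno,\cinco{blue}\}$) somewhere left of the origin and infinitely many doors marching off to $+\infty$ on the right, with the gaps between consecutive right-hand doors tending to infinity (say doors at positions $2^k$); the finitely many coordinates near the origin are left arbitrary, so the family is dense. The first structural input is a \emph{barrier lemma}: the local rule updates a door using only that cell and its right neighbour, so for any door at position $p$ the half-line $[p,\infty)$ evolves autonomously, independently of the coordinates left of $p$. Consequently information can only ever cross a door \emph{leftwards}, and a short case check of the rule shows the sole carrier is a keymaster ghost \key{blue} passing through a door: a plain ghost meeting a door turns into a pacman and bounces back right, and a pacman meeting a door turns into a leftward keymaster, so nothing ever propagates rightward across a door. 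In particular the single left-hand door completely shields $[-r,r]$ from every disagreement situated to its left, and all central disagreements must enter from the right across the right-hand doors.

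The heart of the argument is a \emph{throttling estimate} for the leftward keymaster flux across a single gap. Between two consecutive doors at distance $g$, I claim at most $\bigo(n/g)+\bigo(1)$ keymaster crossings of the left door can occur up to time $n$. The mechanism is that each leftward crossing must be fed by a keymaster arriving at the left door, which (tracing the rule) is produced only when a pacman strikes the right door; that pacman is in turn born when a ghost strikes the left door and reverses direction. So between successive crossings the system must regenerate a pacman and send it on a full round trip across the gap and back, costing time of order $2g$; moreover the returning pacman annihilates the intervening leftward ghost stream, preventing any faster accumulation. Feeding the output of one gap into the next, and using that the gaps grow, the flux of beyond-$R$ disagreements that survives all the way to the central window is $\bigo(1/R)$.

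To finish, I would combine these facts: for $y\in B_\d(x)$ with $\d=2^{-R}$ the orbits agree on $[-R,R]$, the left door kills all left-hand discrepancies, and the throttling estimate bounds by $\bigo(1/R)$ the rate at which discrepancies cross into the window from the right; since each is carried by a unit-speed particle it occupies $[-r,r]$ for only $\bigo(r)$ steps, so the upper density of central-disagreement times is $\bigo(r/R)$, which tends to $0$ as $\d\to 0$. Hence $\limsup_n \frac1n\sum_{i=1}^n d(T^ix,T^iy)<\e$ for $\d$ small, making $x$ a mean equicontinuity point, and density of the family yields that the mean equicontinuity points are dense. The main obstacle is the throttling estimate itself: turning the round-trip/annihilation heuristic into a rigorous bound valid uniformly over \emph{all} admissible far fields $y$ (which may pump in an arbitrary stream of ghosts and keymasters from beyond $R$), and composing the per-gap bounds through the growing sequence of doors without accumulating error, is the delicate part, and is exactly where a clean monotone particle-counting quantity would carry most of the weight.
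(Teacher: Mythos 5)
The theorem you are proving is not actually proved in this paper: it is imported from \cite{delossantos} with a citation, so there is no in-paper proof to compare against line by line. Judged on its own terms, though, your strategy is the right one, and it matches the mechanisms this paper itself leans on: the one-sided update rule for doors (a door's image depends only on itself and its right neighbour, so $[p,\infty)$ evolves autonomously past a door at $p$ --- your barrier lemma is correct), the eventual quiescence of the region right of a door (Lemma~\ref{lem:lemma3}), and the ``filter'' behaviour of a gap between doors that the paper points out explicitly in Example~\ref{example 2}. Your reduction of mean equicontinuity to an upper-density bound on central-window disagreement times, the use of a left door to shield $[-r,r]$ from the left half-line (note that pacmen coming from the far left hit that door and turn into leftward keymasters, so the shield works in both directions), and the choice of a dense family with growing gaps are all sound. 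Two small imprecisions: the particle that ``returns'' and sweeps the gap clean is the rightward-moving pacman, not a returning one (pacmen only move right); and a differing particle entering the central region may linger for more than $\bigo(r)$ steps if the central word contains doors, though still only for a constant depending on the cylinder, which is all you need.

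The one genuine gap is the one you flag yourself: the throttling estimate is asserted, not proved. To close it you need the following conservation/injectivity chain, each step of which is a finite case check on the local rule: new pacmen are created only at the cell immediately right of a door (triggered by a ghost reaching that door), new keymasters only at the cell immediately left of a door (by a pacman reaching it), and a keymaster crossing a door re-emerges as a plain ghost (except in the corner case where a pacman is already waiting two cells inside, which is still charged to that pacman). Since a pacman born at the left end of a gap has nothing to its left inside the gap and eats every leftward particle it meets, no ghost can reach the left door while it is in flight, so consecutive pacman births at a given door are separated by at least the round-trip time $\approx 2g$; this gives at most $n/(2g)+\bigo(1)$ crossings of the left door up to time $n$, uniformly in the far field. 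You then need the monotonicity ``crossings at an inner door $\le$ crossings at the next door out $+$ initial particles in the intervening gap'' to propagate the bound from the outermost agreed gap (of size $\approx R/2$) down to the first door right of the window; note that it is this outermost gap, not the inner ones, that supplies the $\bigo(n/R)$ bound, since the inner gaps are smaller and only cap rather than further reduce the flux. None of this is conceptually hard, but it is exactly the content of the cited proof, and without it your argument is an outline rather than a proof.
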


\begin{lemma}[Lemma 3.5 \cite{delossantos}]\label{lem:lemma3}
	Let $(X,T)$ be the Pacman CA, $m>0$, $w\in A^{m}$  and $x= \infinito\vacio .w\pared \ \vacio^{\infty}$. There exists $N>0$ such that for all $n\geq N$,
	\begin{center}
		$T^{n}x_{i}\in \{ \vacio,\pared \} \ \forall \  i\geq 0.$
	\end{center}
\end{lemma}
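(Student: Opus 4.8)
The plan is to track the \emph{active} symbols---pacmans \pacman{yellow}, ghosts \ghost{blue}, keymaster ghosts \key{blue} and doors-with-ghost \cinco{blue}---and to show that each of them is either eaten by a pacman or escapes to the far left in bounded time, after which the region $i\ge 0$ consists only of empty spaces and plain doors. Since doors never move ($x_i\in\{\uno,\cinco{blue}\}$ iff $Tx_i\in\{\uno,\cinco{blue}\}$), the symbol $\pared$ at position $m$ is the rightmost door at all times, and positions $i>m$ are empty to begin with. Because nothing crosses a door moving rightwards---a pacman meeting a door on its right turns into a keymaster ghost and reverses---the region $i>m$ stays empty forever, and, as a door becomes \cinco{blue} only when ghost material sits immediately to its right, the door at $m$ remains the plain symbol $\uno$ throughout. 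Likewise, to the left of $[0,m]$ there are no doors, so no pacman is ever created there and the only active symbols occurring at positions $i<0$ are left-moving ghosts that recede; in particular no active symbol ever enters $[0,\infty)$ from outside. It therefore suffices to clear the finite interval $[0,m]$.

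Let $a(n)$ denote the number of active symbols occurring at positions $i\ge 0$ in $T^nx$. Reading off the itemised behaviour of $T$ listed above, a pacman moves right until it reaches a door, where it becomes a keymaster ghost and reverses; a plain ghost moves left until it meets a door on its left, where it reflects into a pacman; and a keymaster ghost moving left tunnels through the door on its left, re-emerging one cell further to the left as a plain ghost (the door being \cinco{blue} for a single step). Between two consecutive doors the ghost material thus performs a bounded round trip---ghost, then pacman at the left door, then keymaster at the right door---after which it tunnels one door further to the left. Each such local transition sends active symbols to active symbols and creates none, so $a(n)$ is non-increasing; it strictly decreases precisely when a ghost is eaten or a piece escapes past position $0$ to the left.

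The core of the argument is a stabilization step. Being a non-increasing sequence of non-negative integers, $a(n)$ is eventually constant, say $a(n)=c$ for all $n\ge n_0$; in particular no eating and no escape occur after time $n_0$. Suppose $c>0$ and fix any active symbol present at time $n_0$. Within bounded time it becomes ghost material (a pacman reaching a door turns into a keymaster ghost), and, since no eating occurs after $n_0$, from then on this piece follows its forced itinerary---reflecting and tunnelling at the doors of $[0,m]$---ending up one door further to the left in each round trip. After boundedly many round trips it lies to the left of every door of $[0,m]$, and then, with no door left to reverse it, it recedes past position $0$ within bounded time. That escape lowers $a$, contradicting $a(n)\equiv c$. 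Hence $c=0$: for all sufficiently large $n$ the interval $[0,m]$ carries no active symbol, whence every door there displays the plain symbol $\uno$ (a \cinco{blue} would require adjacent ghost material). Taking $N$ to be this threshold, $T^nx_i\in\{\vacio,\pared\}$ for all $i\ge 0$ and all $n\ge N$.

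The main obstacle I anticipate is deriving the ``forced leftward itinerary'' rigorously from the explicit local rule---formalizing the two-step keymaster--door interaction, controlling the transient \cinco{blue} states, and guaranteeing that a tracked piece never stalls---together with excluding non-terminating bouncing when several symbols interact simultaneously. The stabilization device is precisely what removes the last difficulty: restricting to times $n\ge n_0$ after $a(n)$ has settled, we may assume no collisions occur, so each tracked piece follows the clean deterministic itinerary above and its leftward migration is unobstructed.
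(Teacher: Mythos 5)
This lemma is quoted from \cite{delossantos} and the present paper contains no proof of it, so there is no in-paper argument to compare yours against; I can only assess your proposal on its own terms. Your global strategy is reasonable and, I believe, completable: the boundary control is correct (doors are never created, so no door ever exists outside $[0,m]$, hence no pacman is ever created at a negative position and nothing enters $[0,\infty)$ from the left; nothing crosses the door at $m$ rightwards, so positions $>m$ stay \cero\ and that door stays \uno); the count $a(n)$ of active symbols in $[0,\infty)$ is indeed non-increasing; and a \emph{solitary} active symbol in $[0,m]$ does follow the reflect/tunnel itinerary you describe and exits past $0$ in time $O(m^2)$.

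The genuine gap is the step ``$a(n)\equiv c$ for $n\ge n_0$, hence no collisions occur, hence each tracked piece follows the clean solo itinerary.'' Constancy of $a$ only rules out events that destroy a particle or push one past the origin \emph{at that instant}; it does not rule out coexistence configurations in which the local rule seen by your tracked particle differs from the solo case --- e.g.\ two adjacent ghosts travelling left together, a pacman immediately behind another pacman that is about to reflect off a door, a ghost reaching a door that is currently in the transient \cinco{blue}\ state because another particle is tunnelling through it, or a second keymaster arriving at a door already holding a ghost. For each such configuration one must check against the transition table that either (i) it forces a strict decrease of $a$ within bounded time (so it cannot persist under the assumption $a\equiv c$, e.g.\ $\uno\,\ghost{blue}\,\ghost{blue}\to\uno\,\pacman{yellow}\,\cero$ destroys a ghost the moment the pair reaches a door), or (ii) it leaves the tracked particle's leftward progress unaffected. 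This finite but nontrivial case analysis is exactly what your last paragraph defers, and the stabilization device does not discharge it by itself: ``no eating after $n_0$'' is a consequence of what you are trying to prove about the post-$n_0$ dynamics, not a licence to assume the particles never meet. Until those cases are checked (or replaced by a monotone potential function that strictly decreases along every multi-particle configuration), the contradiction with $c>0$ is not established.
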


\begin{lemma}\label{en filita}
Let $(X,T)$ be the Pacman CA, $i>0$ and $w\in A^+$ a finite word. We define the points 
$$x^{i}= \infinito\vacio.w\pared \ \vacio^{i}\key{blue} \ \vacio^{\infty}.$$
There exist $N,M\geq 0$ such that for every $j\in \ZZ_{\geq 0}$ we have that $T^{N+j}x^{M+j}_{0}\in \{\ghost{blue} ,\key{blue},\cinco{blue}\}$. 

\end{lemma}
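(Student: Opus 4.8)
The plan is to exploit a covariance of the dynamics in the parameter $i$: moving the stray keymaster one cell farther to the right (replacing $x^{i}$ by $x^{i+1}$) merely postpones the whole interaction by one time step, so the diagonal $T^{N+j}x^{M+j}$ will freeze at a single internal instant. Write $m=\abs{w}$, so that in $x^{i}$ the word $w$ occupies $[0,m-1]$, the door $\pared$ sits at site $m$, and the keymaster starts at site $m+i+1$. Apply Lemma~\ref{lem:lemma3} to the keymaster-free point $z=\infinito\vacio.w\pared\,\vacio^{\infty}$: there is $N_{0}$ with $T^{n}z_{p}\in\{\vacio,\pared\}$ for all $p\ge 0$ and $n\ge N_{0}$. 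Since there are no doors at negative sites, no cell there is ever turned into a right-moving pacman, so nothing enters $[0,\infty)$ from the left and this pattern is in fact eventually constant on $[0,\infty)$; call it $c$.

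First I would show that, for every large $i$, the configuration $T^{i}x^{i}$ is the same on $[0,\infty)$. Because doors are fixed and a pacman reaching a door turns into a left-moving keymaster, nothing created to the left of the door is ever sent past it, so the half-line $(m,\infty)$ carries only the lone keymaster gliding left one cell per step; by finite propagation speed its influence does not reach site $m$ before time $i+1$. Hence for $i\ge N_{0}$ the sites $[0,m]$ have already relaxed to $c$ by time $i$, and $T^{i}x^{i}$ equals, on $[0,\infty)$, the fixed \emph{arrival configuration} $y$ consisting of $c$ on $[0,m]$ (so site $m$ is still the empty door $\pared$), a keymaster $\key{blue}$ at site $m+1$, and $\vacio$ beyond --- independently of $i$. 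Let $\tilde y$ be the completion of $y$ by $\vacio$ on the negative sites. Then $T^{i}x^{i}$ and $\tilde y$ agree on $[-1,\infty)$ and differ only on $(-\infty,-2]$, where $T^{i}x^{i}$ carries the finitely many ghosts that escaped leftward while $w$ relaxed. Choosing $M$ large enough that these have permanently left site $-1$, an induction on $s$ (they only ever move left, never turning around for want of a door, so site $-1$ stays empty) yields $T^{i+s}x^{i}_{0}=(T^{s}\tilde y)_{0}$ for all $i\ge M$ and all $s$ up to the first time the disturbance reaches site $0$. Taking $N=M+s_{0}$, where $s_{0}$ is that first time, gives $T^{N+j}x^{M+j}_{0}=(T^{s_{0}}\tilde y)_{0}$ for every $j\ge 0$, so it suffices to exhibit one $s_{0}$ with $(T^{s_{0}}\tilde y)_{0}\in\{\ghost{blue},\key{blue},\cinco{blue}\}$.

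It remains to track the lone keymaster of $y$ down to site $0$. Let $q_{1}<\dots<q_{k}=m$ be the doors in $[0,m]$. By the transition rules recalled above, the keymaster enters the rightmost door (which becomes $\cinco{blue}$) and re-emerges as a ghost on its left; this ghost glides left until, just to the right of the next door, it is turned into a right-moving pacman, which runs back to the previous door, becomes a keymaster, and then crosses that next door. Thus the disturbance gets past one further door on each bounce cycle --- exactly the filter behaviour of Example~\ref{example 2} --- until $q_{1}$ is crossed, after which a ghost emerges at site $q_{1}-1$ and, meeting no further door, glides freely to site $0$ (and if $q_{1}=0$ the crossing itself places a $\cinco{blue}$ at site $0$). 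In every case site $0$ carries a symbol of $\{\ghost{blue},\key{blue},\cinco{blue}\}$ at some finite time $s_{0}$, as required.

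The main obstacle is this last door-crossing analysis: although the qualitative picture is clear from the heuristic rules, making it rigorous demands careful bookkeeping of the local rule at each transition, especially for closely spaced or adjacent doors, where a ghost cannot emerge into an already occupied door cell and one must instead follow the $\cinco{blue}$ (door-with-ghost) states directly. The escaping-ghost non-interference used in the second step is a milder point, settled by the absence of doors on the negative half-line, which leaves no mechanism to reverse a left-moving ghost.
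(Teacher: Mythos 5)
Your proposal follows essentially the same route as the paper's proof: reduce to a relaxed configuration of empty spaces and doors via Lemma~\ref{lem:lemma3}, exploit the fact that increasing $i$ by one merely delays the keymaster's arrival by one time step (so the diagonal $T^{N+j}x^{M+j}$ freezes), and then track the keymaster's passage through the doors by direct application of the rules. The paper compresses the last step into ``by simple application of the rules'' and the non-interference of leftward-escaping ghosts into the phrase ``look exactly like the cases presented on the first part of the proof,'' whereas you spell both out; the substance is the same.
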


\begin{proof}
First assume that $w\in \{ \vacio,\pared \}^+$. By simple application of the rules one has that there exists $N_{0}\geq 0$ such that $T^{N_{0}}x_{0}^{1}\in  \{ \ghost{blue}, \key{blue}, \cinco{blue} \} $. Furthermore, since $Tx^i=x^{i-1}$, we obtain the result for $M=0$, that is, for every $j\in \ZZ_{\geq 0}$ we have that $T^{N_{0}+j}x^{j}_{0}\in \{\ghost{blue} ,\key{blue},\cinco{blue}\}$.

For the general case, let $w\in A^{m}$ and $x=\infinito\vacio.w\pared \ \vacio^{\infty}$. Lemma \ref{lem:lemma3} implies that there exists $M'>0$ such that 
$$T^{M'}x_{i} \in \{ \vacio,\pared \} \ \forall \  i\geq 0.$$
Since points $y^{i}:= T^{M'}x^{M'+i}$ look exactly like the cases presented on the first part of the proof, we conclude that there exist $N,M\geq 0$ such that for every $j\in \ZZ_{\geq 0}$ we have that $T^{N+j}x^{M+j}_{0}\in \{\ghost{blue} ,\key{blue},\cinco{blue}\}$.



\end{proof}

We are now ready to prove that the Pacman CA is diam-mean sensitive. Actually it even satisfies a stronger property. 

The strongest form of diameter sensitivity is called cofinitely sensitivity \cite{moothathu2007stronger}. 
\begin{definition}
	Let $(X,T)$ be a TDS. For $U\subseteq X$ and $\d>0$, let $$N_{T}(U,\d):=\{ n\in \NN : \  diam(T^{n}U)>\d \} .$$
	We say that $(x,T)$ is \textbf{cofinitely sensitive} if there exists $\d>0$ such that for every nonempty open set $U\subseteq X$, we have that $N_{T}(U,\d)$ is cofinite (complement is finite).	
	
\end{definition}
It is not difficult to see that every cofinitely sensitive TDS is diam-mean sensitive. 

\begin{theorem}\label{prop:pacman}

The Pacman CA is cofinitely sensitive.

\end{theorem}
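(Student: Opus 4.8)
The plan is to take $\delta=\tfrac12$ and show that for every nonempty open $U$ one has $\diam(T^{n}U)\geq 1$ for all sufficiently large $n$, by producing two points of $U$ whose $T^{n}$-images disagree at coordinate $0$ (a disagreement at coordinate $0$ forces $d=2^{0}=1$). First I reduce to cylinders: if $V\subseteq U$ then $T^{n}V\subseteq T^{n}U$, so $N_{T}(V,\delta)\subseteq N_{T}(U,\delta)$ and it suffices to treat cylinders, and I may freely shrink $U$ to a sub-cylinder. Thus I assume $U$ is the set of points agreeing with a fixed $\bar x$ on a window $[-a,b]$ with $a,b\geq 0$, enlarging the window (i.e. passing to a sub-cylinder) so that $0$ is fixed; write $W=\bar x_{[-a,b]}$.

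Next I build the two points, both in $U$ since they agree with $\bar x$ on $[-a,b]$. For the comparison point take $z$ equal to $W$ on $[-a,b]$, a door $\pared$ at coordinate $b+1$, and $\vacio$ elsewhere; after the shift $\sigma^{-a}$ this is exactly the configuration of Lemma \ref{lem:lemma3} with word $W$, so there is $N_{1}$ with $T^{n}z_{j}\in\{\cero,\uno\}$ for all $j\geq -a$ and $n\geq N_{1}$, and in particular $T^{n}z_{0}\in\{\cero,\uno\}$. For each large $n$ take the injection point $y_{n}$ equal to $W$ on $[-a,b]$, a door at $b+1$, then $\vacio^{i}$, a keymaster $\key{blue}$, and $\vacio$ thereafter, with the gap $i=i(n)$ tuned so that the unit released by the keymaster reaches coordinate $0$ at time $n$. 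After $\sigma^{-a}$ these are precisely the points $x^{i}$ of Lemma \ref{en filita}, and the identity $Tx^{i}=x^{i-1}$ collapses the family to a single orbit: $T^{n}y_{n}$ equals a fixed configuration $T^{N}x^{M}$ independent of $n$, whose value at coordinate $0$ lies in $\{\ghost{blue},\key{blue},\cinco{blue}\}$ (the interior version of Lemma \ref{en filita} discussed below). Since $\{\ghost{blue},\key{blue},\cinco{blue}\}$ and $\{\cero,\uno\}$ are disjoint, $T^{n}y_{n}$ and $T^{n}z$ differ at coordinate $0$ for every $n\geq\max(N_{1},N_{2})$, whence $\diam(T^{n}U)\geq 1$ and $N_{T}(U,\tfrac12)$ is cofinite.

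The main obstacle is obtaining the ghost-type symbol at coordinate $0$ rather than at the left end $-a$ of the window. Lemma \ref{en filita} places its word $w$ at coordinates $[0,\cdot]$ with $\vacio$ to its left and concludes at the \emph{left end} of the word; applied verbatim to $W$ (after $\sigma^{-a}$) it only yields a disagreement at coordinate $-a$ of the original frame, giving $\diam(T^{n}U)\geq 2^{-a}$ — a bound that degenerates as the window grows and so cannot serve for cofinite sensitivity, which requires a single uniform $\delta$. To repair this I must track the injected unit through the interior coordinate $0$: released to the left of the door at $b+1$ and moving leftward, it passes coordinate $0$ before reaching $-a$, and the two ingredients of Lemma \ref{en filita} — first applying Lemma \ref{lem:lemma3} to reduce $W$ to a static pattern of $\cero$'s and $\uno$'s, then following the single leftgoing unit through that static pattern via $Tx^{i}=x^{i-1}$ — deliver a symbol of $\{\ghost{blue},\key{blue},\cinco{blue}\}$ at coordinate $0$ at a time increasing by one as $i$ increases by one. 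This is the step that needs genuine care, since the ``filter'' behaviour between consecutive doors (a leftgoing ghost turning into a rightgoing $\pacman{yellow}$ at a door, as in Examples \ref{example 2} and \ref{example 1}) must be shown not to prevent a ghost-type symbol from ever occupying coordinate $0$; granting this interior version of Lemma \ref{en filita}, the argument above closes.
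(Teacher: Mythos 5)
Your construction is the paper's: the comparison point $z={}^\infty\cero\, W\, \pared\,\cero^\infty$ controlled by Lemma \ref{lem:lemma3}, the family of injection points with a keymaster $\key{blue}$ at distance $i$ beyond the appended door controlled by Lemma \ref{en filita}, and the decrement relation $Tx^{i}=x^{i-1}$ to upgrade ``some large time'' to ``every large time''. You are also right that the reduction to cylinders is where the care is needed: the paper's proof only treats cylinders $[w]_0$, whose left edge is coordinate $0$, and these do not form a basis (a cylinder prescribing $[-a,b]$ with $a>0$ contains no $[w]_0$), so the disagreement really must be produced at an interior coordinate of the prescribed window, exactly as you observe. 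The problem is that your argument stops at the decisive moment: the ``interior version of Lemma \ref{en filita}'' --- a symbol outside $\{\cero,\uno\}$ occupying coordinate $0$, rather than coordinate $-a$, at a prescribed large time --- is the one fact that does all the work, and you close the proof by ``granting'' it. As written, that is a genuine gap, not a detail.

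It is, however, closable without re-examining the door/filter dynamics. Since the CA has radius $1$, if two configurations agree on $(-\infty,c]$ at time $n$ they agree on $(-\infty,c-1]$ at time $n+1$, so the leftmost coordinate at which $T^{n}z$ and $T^{n}y^{i}$ disagree decreases by at most one per step. At time $0$ it equals $b+i+2>0$; the shifted Lemma \ref{en filita} gives $T^{N+j}y^{M+j}_{-a}\in\{\ghost{blue},\key{blue},\cinco{blue}\}$ while Lemma \ref{lem:lemma3} gives $T^{N+j}z_{-a}\in\{\cero,\uno\}$, so at time $N+j$ the leftmost disagreement is $\le -a\le 0$; hence at some intermediate time $n_{0}(i)$ it is exactly $0$, which already yields $d\bigl(T^{n_{0}}z,T^{n_{0}}y^{i}\bigr)=1$. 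To sweep out all large $n$, note that after the stabilization time of Lemma \ref{lem:lemma3} the symbol $T^{n}z_{0}$ is constant (it lies in $\{\cero,\uno\}$ and being a door is $T$-invariant), while the decrement relation for the stabilized configurations gives $T^{n_{0}+1}y^{i+1}_{0}=T^{n_{0}}y^{i}_{0}\neq T^{n_{0}+1}z_{0}$, so $n_{0}(i+1)=n_{0}(i)+1$ and every sufficiently large $n$ is attained. With a paragraph to this effect your proof is complete; without it, it is an outline of the paper's argument together with a correct diagnosis of the point that the paper itself leaves implicit.
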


\begin{proof}
Let $w$ be a finite word and let $x\in [w]_{0}$ such that $x=\infinito\vacio.w\pared \ \vacio^{\infty}$. Now, let $(y^{i})_{i=0}^{\infty}\subset [w]_{0}$ such that
$$y^{i}=\infinito\vacio.w\pared \ \vacio^{i}\key{blue} \ \vacio^{\infty}.$$
By Lemma \ref{lem:lemma3}, there exists $M\geq 0$ such that for all $n\geq M$ we have that  
$$T^{n} \in \{ \vacio,\pared \} \ \forall \  i\geq 0.$$
Hence, by Lemma  \ref{en filita}, there exists $N\geq 0$ such that
$$d(T^{N+j}x,T^{N+j}y^{M+j})=1$$
for all $j\geq 0$. Therefore, the Pacman CA is cofinitely sensitive. 
\end{proof}

\begin{corollary}
There exists an almost mean equicontinuous CA that is not almost diam-mean equicontinuous. 
\end{corollary}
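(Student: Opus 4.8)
The plan is to combine the two principal facts already established for the Pacman CA: it is almost mean equicontinuous by the cited theorem of \cite{delossantos}, and it is cofinitely sensitive by Theorem \ref{prop:pacman}. The only missing link is to deduce from cofinite sensitivity that the Pacman CA is \emph{not} almost diam-mean equicontinuous. I would in fact prove the stronger statement that it has no diam-mean equicontinuity points whatsoever, i.e. $EQ^{DM}=\emptyset$, and then invoke a Baire category argument to conclude.

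First I would isolate the elementary observation that a cofinitely sensitive TDS $(X,T)$ satisfies $EQ^{DM}=\emptyset$. Let $c>0$ be the cofinite sensitivity constant, and suppose toward a contradiction that some $x\in EQ^{DM}$. Applying the definition of diam-mean equicontinuity with $\e=c$, there is a radius $\d>0$ such that
$$\limsup_{n\to\infty}\frac{1}{n}\sum_{i=1}^{n}\diam\big(T^{i}B_{\d}(x)\big)<c.$$
Now $U:=B_{\d}(x)$ is a nonempty open set, so by cofinite sensitivity $N_{T}(U,c)=\{\,i:\diam(T^{i}U)>c\,\}$ is cofinite; that is, $\diam(T^{i}B_{\d}(x))>c$ for all but finitely many $i$. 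The Cesàro average of a bounded sequence, all but finitely many of whose terms exceed $c$, has $\liminf$ at least $c$, so $\liminf_{n\to\infty}\frac{1}{n}\sum_{i=1}^{n}\diam(T^{i}B_{\d}(x))\geq c$, contradicting the displayed inequality. Hence no such $x$ exists and $EQ^{DM}=\emptyset$.

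I would then finish as follows. The Pacman CA lives on a nonempty full shift, which is a complete metric space, so by the Baire category theorem the empty set is not residual; therefore $EQ^{DM}=\emptyset$ shows that the Pacman CA is not almost diam-mean equicontinuous. Since it is almost mean equicontinuous by \cite{delossantos}, the Pacman CA is the desired example, which proves the corollary. The argument is short and I do not expect a genuine obstacle; the only point deserving care is the averaging step, where one must observe that modifying finitely many terms of a bounded sequence leaves its asymptotic Cesàro behaviour unchanged, so that a density-one occurrence of diameters above $c$ forces the running average above $c$ in the limit. (Note that routing the argument through plain diam-mean sensitivity instead would be more delicate, since positive upper density of $N_{T}(U,c)$ need not be uniform over $U$; cofinite sensitivity, giving density one, is exactly what makes the averaging clean.)
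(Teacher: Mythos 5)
Your proof is correct and follows essentially the same route as the paper: the corollary is obtained by combining the cited almost mean equicontinuity of the Pacman CA with Theorem \ref{prop:pacman}, and your averaging argument simply makes explicit the link (cofinite sensitivity $\Rightarrow$ $EQ^{DM}=\emptyset$ $\Rightarrow$ not almost diam-mean equicontinuous) that the paper leaves as an immediate remark.
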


Using a similar strategy it can be shown that the Pacman Level 2 CA from \cite{delossantos} is also cofinitely sensitive.

Another example that lies within the context of this theory was constructed in \cite{torma2015uniquely}. This CA is non-equicontinuous, uniquely ergodic and the unique measure is supported on a fixed point. By \cite[Theorem 1.2]{garcia2017meanliyorke} this CA must be mean equicontinuous. 

The question of whether there exists a CA that is almost diam-mean equicontinuous but not almost equicontinuous is being addressed in another paper.    
\begin{theorem}
	\cite{delossantos2}
	There exists an almost diam-mean equicontinuous CA that is neither almost equicontinuous nor mean equicontinuous. 
\end{theorem}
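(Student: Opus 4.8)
The plan is to exhibit a single cellular automaton $(X,T)$ and to verify the three clauses separately: (i) it is sensitive, hence not almost equicontinuous; (ii) some point fails to be a mean equicontinuity point; and (iii) a residual set of points are diam-mean equicontinuity points. The guiding idea is to design the local rule around \emph{particles} (signals) moving on a quiescent background, in the spirit of the Pacman CA, but tuned so that a perturbation made far from the origin can always \emph{eventually} influence the origin (so there is no blocking word), while for a \emph{generic} configuration the set of times at which such an influence actually reaches the origin has density zero. This ``density-zero but nonempty'' behaviour is exactly what separates diam-mean equicontinuity from equicontinuity, and it is the mechanism that carries the whole argument. Note that clause (iii) already forces almost mean equicontinuity, since every diam-mean equicontinuity point is a mean equicontinuity point; clause (ii) is the sharp assertion that the mean version nevertheless fails at a specific point.

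For clause (i), I would show directly that $T$ admits no blocking word: given any finite word $w$, I would produce a configuration containing $w$ together with an arbitrarily distant perturbation whose induced signal eventually crosses the region occupied by $w$ and alters a fixed coordinate. By Kůrka's dichotomy (every CA is either almost equicontinuous or sensitive), the absence of a blocking word forces sensitivity, so $EQ$ is meager and the system is not almost equicontinuous. Equivalently, one may invoke Proposition~\ref{prop:lep}: an equicontinuity point must be locally eventually periodic, and I would check that for the proposed rule the locally eventually periodic points are meager, since a generic configuration carries infinitely many incoming signals that keep some column from being eventually periodic.

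For clause (ii), I would single out one ``maximally active'' configuration $x^{*}$ for which the incoming signals reach the origin with \emph{positive} density no matter how far out one perturbs: concretely, for every $\delta=2^{-k}$ I would build a companion point $y\in B_{\delta}(x^{*})$ that agrees with $x^{*}$ on $[-k,k]$ but whose orbit differs from that of $x^{*}$ at the origin along a set of times of positive lower density. Then $\limsup_{n}\frac1n\sum_{i=1}^{n} d(T^{i}x^{*},T^{i}y)$ is bounded below by a fixed positive constant for every $\delta$, so $x^{*}$ is not a mean equicontinuity point and $(X,T)$ is not mean equicontinuous.

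The heart of the matter, and the step I expect to be the main obstacle, is clause (iii). I would describe a dense $G_{\delta}$ set $\mathcal{G}$ of ``good'' configurations — those that contain, in both directions, arbitrarily long absorbing patterns placed with suitable frequency — and show $\mathcal{G}\subseteq EQ^{DM}$. Fixing $x\in\mathcal{G}$ and a target accuracy $\varepsilon=2^{-m}$, the quantity $\mathrm{diam}(T^{i}B_{\delta}(x))$ with $\delta=2^{-k}$ exceeds $\varepsilon$ precisely when some perturbation of $x$ outside $[-k,k]$ manages to change $T^{i}x$ within $[-m,m]$; the core estimate is that, for good $x$, the density of such times $i$ tends to $0$ as $k\to\infty$, because an incoming signal must traverse all the absorbing patterns in $[k,k']$ (respectively $[-k',-k]$) before reaching the origin, and these absorptions make the successful crossing times increasingly sparse. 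Establishing this density bound — uniformly enough to control $\limsup_{n}\frac1n\sum_{i\le n}\mathrm{diam}(T^{i}B_{\delta}(x))$ and to hold on a residual set — is the delicate point, precisely because the rule must simultaneously guarantee that signals are \emph{never permanently} blocked (needed for clauses (i) and (ii)) yet are absorbed often enough that their arrivals at any fixed coordinate thin out to density zero for generic configurations. Calibrating a finite local rule to achieve both at once is where the real work lies; once the density estimate is in hand, the three clauses combine to give the theorem.
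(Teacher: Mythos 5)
There is a genuine gap, and you identify it yourself: no cellular automaton is actually constructed. The statement is an existence claim, so the entire mathematical content lies in exhibiting a concrete local rule and verifying the three properties for it; your text is a (well-organized) specification of what such a rule would need to do, not a proof. The phrases ``I would describe a dense $G_{\delta}$ set,'' ``I would single out one maximally active configuration,'' and especially ``Calibrating a finite local rule to achieve both at once is where the real work lies'' concede that the decisive step --- a finite-radius rule whose signals are never permanently blocked (so that no blocking word exists and some point fails mean equicontinuity with positive lower density of disagreement times) yet whose arrival times at any fixed coordinate have density zero for a residual set of configurations --- has not been produced, and the accompanying density estimate has not been carried out. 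Without the rule, clause (ii) in particular cannot even be formulated precisely, since the ``maximally active'' point $x^{*}$ and its companions $y$ depend entirely on the dynamics of the unspecified automaton. Note also that the paper you are working from does not prove this theorem either: it is stated as a citation to a companion paper, so there is no in-paper argument to match your outline against.

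That said, your reading of the logical structure is correct and worth keeping in mind: almost diam-mean equicontinuity already yields a residual set of mean equicontinuity points, so the force of ``not mean equicontinuous'' is only that some single point fails, while ``not almost equicontinuous'' requires (by K\r{u}rka's dichotomy) full sensitivity. Your identification of the tension --- signals must always eventually get through, but generically with vanishing frequency --- is exactly the right design constraint, and your proposed use of Proposition~\ref{prop:lep} to rule out almost equicontinuity via failure of local eventual periodicity is sound in principle. But until a specific alphabet and local rule are written down and the sparsity estimate for $\mathrm{diam}(T^{i}B_{\delta}(x))$ is proved uniformly on a residual set, this remains a research plan rather than a proof.
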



\bibliographystyle{plain}
\bibliography{ref}

\medskip

\begin{itemize}
	\item \emph{L. de los Santos Ba\~nos, Instituto de F\'isica, Universidad Aut\'onoma de San Luis Potos\'i, Mexico luguis.sb.25@gmail.com}
	\item \emph{F. Garc\'ia-Ramos, CONACyT \& Instituto de F\'isica, Universidad Aut\'onoma de San Luis Potos\'i, Mexico fgramos@conacyt.mx}
\end{itemize}
\end{document}